\documentclass[sigconf,nonacm]{acmart}

\newif\ifcleanfirstpage
\cleanfirstpagetrue
\ifcleanfirstpage
  \setcopyright{none}
  \makeatletter
  \renewcommand{\footnotetextcopyrightpermission}[1]{}
  \makeatother
\fi

\usepackage{amsmath,mathtools}
\usepackage{algorithm}
\usepackage[noend]{algpseudocode}
\usepackage{booktabs}
\usepackage{array}
\usepackage{tabularx}
\usepackage{microtype}
\usepackage{enumitem}
\usepackage{url}
\usepackage{graphicx}
\newtheorem{theorem}{Theorem}
\newtheorem{lemma}[theorem]{Lemma}
\newtheorem{proposition}[theorem]{Proposition}

\newcommand{\OPT}{\mathrm{OPT}}
\newcommand{\calM}{\mathcal{M}}
\newcommand{\E}{\mathbb{E}}

\newcommand{\one}{\mathbf{1}}
\newcommand{\W}{\mathrm{W}}

\title{A Truthful-in-Expectation 2-Approximation for Pair-Specific Cascade Auctions in AI-Generated Content}
\author{Tao Jiang\textsuperscript{\dag}\quad Minbo Gao\textsuperscript{\dag}\quad Shaowei Cai}
\affiliation{%
  \department{Key Laboratory of System Software (Chinese Academy of Sciences)}
  \department{State Key Laboratory of Computer Science}
  \institution{Institute of Software, Chinese Academy of Sciences}
  \department{School of Computer Science and Technology}
  \institution{University of Chinese Academy of Sciences}
  \city{Beijing}
  \country{China}}
\email{\{jiangt,gaomb,caisw\}@ios.ac.cn}

\renewcommand{\shortauthors}{Jiang, Gao, and Cai}

\begin{document}

\begin{abstract}
Generative AI interfaces can insert sponsored content at multiple points in an answer, where click-through rates (CTRs) depend on the advertiser--position pair. Under the cascade model, users read ads sequentially and stop after their first click, so both ad assignment and display order affect welfare. For arbitrary pair-specific CTRs, Balseiro et al.\ (WWW~2026) gave a $4/(1-\varepsilon)$-approximation whose monotonicity has not been established, and a monotone $O(\log m)$-approximation for $m$ candidate positions. Earlier truthful constant-factor mechanisms require factorized CTR models and do not apply. Whether monotonicity is compatible with a constant-factor approximation in this general setting remained open.

We answer this question affirmatively. For arbitrary pair-specific CTRs and any limit on displayed ads, we give a polynomial-time mechanism that is truthful and individually rational in expectation and achieves a $2$-approximation to optimal cascade welfare. Our mechanism selects an advertiser--position matching and a designated last ad, then randomly thins earlier ads so that each selected ad's expected cascade CTR is the same fraction of its standalone CTR. This yields an exact closed-form expression for expected welfare, which we maximize over a bid-independent lottery range using linear programming. Exact optimization ensures monotonicity, while a random-prefix decomposition proves the factor-two guarantee, which is tight for this range. We also show that exactly optimizing the previous welfare surrogate can violate monotonicity, and evaluate welfare and running time on synthetic instances.
\end{abstract}

\ccsdesc[500]{Theory of computation~Algorithmic game theory and mechanism design}
\ccsdesc[300]{Information systems~Sponsored search advertising}
\keywords{online advertising, position auctions, generative AI, cascade model, truthful-in-expectation mechanisms, approximation algorithms}

\maketitle
\hypersetup{pdfauthor={Tao Jiang, Minbo Gao, Shaowei Cai}}
{\renewcommand{\thefootnote}{\dag}\footnotetext{Tao Jiang and Minbo Gao contributed equally to this work.}}

\section{Introduction}

Generative search answers, AI-written travel guides, and shopping summaries can mention sponsors at several points in the generated text. In such content, the click-through rate (CTR) of an ad may depend on both the advertiser and the position where the ad appears. Balseiro et al.~\cite{balseiro2026position} model this setting with a separate CTR $p_{ij}$ for each advertiser $i$ and position $j$, so these rates need not split into an advertiser effect and a position effect as they do in classical position auctions~\cite{varian2007position,edelman2007gsp}. Users read generated text in order, and in the cascade model~\cite{craswell2008experimental,kempe2008cascade,aggarwal2008markovian} they stop reading after their first click. As a result, each displayed ad takes clicks away from the ads shown after it. Like Balseiro et al., we assume that the platform decides the order in which users see the matched ads and that this order does not change their standalone CTRs; Section~\ref{sec:discussion} discusses layouts with a fixed order. Advertisers therefore compete both for positions and for the user's attention as the user reads.

An auction in this setting must do more than maximize welfare: it must also give advertisers a reason to bid their true value per click. No known polynomial-time algorithm finds the welfare-maximizing allocation, so the platform cannot run the Vickrey--Clarke--Groves (VCG) mechanism with exact optimization. When each advertiser holds a single private value, truthful bidding requires that an advertiser receive at least as many expected clicks when it raises its bid~\cite{myerson1981optimal,archer2001truthful}. Approximation algorithms with strong welfare guarantees need not satisfy this condition, so the main challenge is to achieve efficient computation, good welfare, and monotone clicks at the same time.

Existing algorithms for this model achieve one of these goals but not both. Balseiro et al.~\cite{balseiro2026position} obtain a $4/(1-\varepsilon)$-approximation, for any fixed $\varepsilon>0$, by optimizing a simpler restricted-welfare objective with a polynomial-time approximation scheme for budgeted matching~\cite{berger2008budgeted}, but it remains unknown whether the resulting allocation is monotone. Their monotone algorithm runs a greedy matching inside a randomly chosen CTR bucket, and it loses an $O(\log m)$ factor for $m$ positions. Earlier work gives truthful constant-factor auctions for structured cascade models~\cite{gatti2013mechanism,farina2017cascade}, but these results rely on CTRs that factor into simpler terms and do not cover arbitrary $p_{ij}$. Balseiro et al.~\cite{balseiro2026position} therefore asked the following question: \emph{can a polynomial-time monotone allocation achieve a constant-factor approximation to welfare for arbitrary pair-specific cascade CTRs?}

We answer this question affirmatively; Section~\ref{sec:truth} states the formal version of the following result (Theorem~\ref{thm:main-formal}).
\begin{theorem}[Main result, informal]\label{thm:main}
For arbitrary public pair-specific CTRs and any limit on the number of displayed ads, there is a polynomial-time mechanism that is truthful and individually rational in expectation, and whose expected welfare is at least half of the optimal cascade welfare.
\end{theorem}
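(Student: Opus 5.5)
We prove Theorem~\ref{thm:main} by building a maximal-in-distributional-range (MIDR) mechanism. Because the range of allocations is fixed independently of the bids and we exactly maximize expected welfare over it, truthfulness in expectation follows from the standard VCG-over-a-range argument. Equivalently, monotone expected clicks combined with Myerson payments yield truthfulness and individual rationality in expectation.

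\textbf{The range.} An element of the range is a partial matching $M$ of advertisers to positions with $|M|\le k$, together with a designated last ad $\ell\in M$. Set $\alpha_M=\prod_{(i,j)\in M}(1-p_{ij})$. The lottery is as follows. We display $\ell$ last, and every other matched ad is displayed independently. The other ads are ordered arbitrarily but deterministically; each is kept with a probability chosen so that every selected ad $(i,j)$ receives expected cascade CTR exactly $\beta_M p_{ij}$. The common factor $\beta_M$ depends only on $M$, for instance $\beta_M=\prod_{(i,j)\in M\setminus\{\ell\}}(1-p_{ij}/2)$, or whichever choice the thinning calculation makes consistent. The key sub-step is to verify the thinning. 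Keeping an earlier ad with probability $q$ reduces the survival probability of later ads by the factor $(1-qp)$. Choosing the $q$'s recursively from the last ad backwards, we check that all $q\in[0,1]$ and that every ad ends up with the same ratio. The expected welfare is then $\beta_M\sum_{(i,j)\in M} b_i p_{ij}$. This is a closed form, so maximizing it over $M$ is a max-weight matching problem with a multiplicative budget-like factor. We would handle this by guessing $\ell$ and writing an LP over fractional matchings, with the factor linearized through a bucketed or parametric guess of $\sum p$. We need the LP to be integral, or its optimum to be attained by a lottery in the range, so that the optimization over the range is exact. Exactness is what secures monotonicity; we also use it to avoid the surrogate's failure mentioned in the abstract.

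\textbf{The approximation.} Take an optimal cascade allocation with ads ordered $a_1,\dots,a_t$ and CTRs $p_1,\dots,p_t$. Its welfare is $\sum_s v_s p_s\prod_{r<s}(1-p_r)$. The random-prefix decomposition works as follows. For each prefix $\{a_1,\dots,a_s\}$, taken as $M$ with $\ell=a_s$, the range welfare is at least $\tfrac12$ times the contribution to $\OPT$ of that prefix, weighted appropriately. We then average over a random prefix length, chosen with probabilities proportional to the stopping probabilities $p_s\prod_{r<s}(1-p_r)$. This shows that some range element achieves at least $\OPT/2$, and hence so does the maximizer. I expect this inequality to be the main obstacle. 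It requires tuning $\beta_M$ so that $\beta_M\sum v p$ dominates half the cascade sum for every order. The elementary inequality $\prod(1-p_r)\ge 1-\sum p_r$ together with the last-ad exemption is likely what makes the constant exactly $2$.

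Finally, we compute payments via Myerson's integral on expected clicks. This is done by sampling the bid, or exactly, since the allocation curve is piecewise constant with polynomially many breakpoints given the LP structure. This gives individual rationality in expectation and completes the proof of Theorem~\ref{thm:main-formal}.
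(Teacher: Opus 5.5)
Your architecture (a bid-independent range of designated-last thinned lotteries, exact maximization over it, and random-prefix averaging) is the paper's. However, the three steps that carry the proof are left open, and the one concrete choice you make fails.

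Write $d$ for your designated ad $\ell$. Suppose every ad is to receive $\E[\pi_e]=\beta p_e$. Since clicks are disjoint events, the probability of reaching the $k$-th ad is forced to be $R_k=1-\beta\sum_{h<k}p_{e_h}$. The last ad can be clicked only if reached, so (for $p_d>0$) it requires $\beta\le 1-\beta\bigl(S(M)-p_d\bigr)$, i.e.\ $\beta\le 1/(1+S(M)-p_d)$; this is Proposition~\ref{prop:maxscale}. Your $\beta_M=\prod_{e\in M\setminus\{d\}}(1-p_e/2)$ violates this already with one other ad, since $(1-p/2)(1+p)=1+p(1-p)/2>1$ for $0<p<1$.

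The consistent choice is $c(M,d)=1/(1+S(M)-p_d)$, with forward retention probabilities $a_k=c/R_k$. This specific form is what makes exact optimization possible. For fixed $d$ the expected welfare $A_v(M)/(1+S(M)-p_d)$ is a ratio of linear functions on the integral face $\{x\in Q_K:x_d=1\}$, and the Charnes--Cooper substitution turns it into an LP. That LP has no fractional gap, because the ratio at a convex combination of matchings is itself a convex combination of the matchings' ratios; see~\eqref{eq:convex-ratio}. A bucketed guess of $\sum p$ yields only approximate optimization, which voids the maximal-in-range monotonicity argument you rely on. You noted that exactness is needed but did not supply it.

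For the approximation you leave the key inequality unproved, and your hints point the wrong way. The paper uses neither $\prod(1-p_r)\ge1-\sum p_r$ nor the last-ad exemption, and makes no per-prefix comparison. The missing idea is to compare every display $e_1,\dots,e_\ell$ with the symmetric benchmark $r_0=\max_{M}A_v(M)/(1+S(M))$, which is at most $\Phi$ by~\eqref{eq:symmetric}. Use the weights $\lambda_t=\rho_tp_t$ for $t<\ell$ and $\lambda_\ell=\rho_\ell$, so that the full display also absorbs the no-click mass; weights proportional to the stopping probabilities alone give inclusion probabilities $(\rho_k-\rho_{\ell+1})/(1-\rho_{\ell+1})\ne\rho_k$ and do not reproduce $\W(A)$. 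With the correct weights, $\sum_{t\ge k}\lambda_t=\rho_k$ gives $\W(A)=\sum_t\lambda_tA_v(P_t)$ and $\sum_t\lambda_tS(P_t)=1-\rho_{\ell+1}\le1$. Hence $\W(A)\le r_0\sum_t\lambda_t\bigl(1+S(P_t)\bigr)\le2r_0$; the factor $2$ is simply $1+1$.

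Finally, your claim that the allocation curve has polynomially many breakpoints is unsupported. The pivot payments $T_i=\Phi(0,v_{-i})-\Phi(v)+v_iq_i(v)$ for the range need only one extra exact optimization per advertiser, and they directly give $0\le T_i\le v_iq_i$.
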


Our mechanism optimizes the expected welfare that the displayed ads actually deliver, rather than a surrogate. It does so over a fixed range of randomized displays, that is, a set of lotteries that does not depend on the bids. We need such a range because exact optimization over all cascade displays is hard, and Section~\ref{sec:counterexample} shows that even an exact optimum of a surrogate can violate monotonicity. Each lottery in our range picks a matching and one ad to show last, and it keeps each earlier ad independently with a carefully chosen probability. These probabilities give every selected advertiser an expected cascade CTR equal to its standalone CTR times a common factor, so we can write the expected welfare of each lottery in closed form.

For a fixed last ad, this welfare is a ratio of two linear functions of the matching, and one linear program maximizes it exactly. The linear program always has an optimal solution that is a matching, so relaxing the matching constraints loses nothing (Theorem~\ref{thm:poly}). Since the range does not depend on the bids and our objective equals the delivered welfare, exact optimization makes the allocation monotone, and standard payments then make the mechanism truthful in expectation~\cite{lavi2011truthful,dobzinski2013power} (Theorem~\ref{thm:mono}, Proposition~\ref{prop:tie}).

We also show that restricting the mechanism to this range costs at most a factor of $2$ in welfare. The proof writes the welfare of any cascade display as the expected standalone welfare of a random prefix of its ads and compares this quantity with a matching-based benchmark (Theorem~\ref{thm:approx}). The factor $2$ is tight for our range (Proposition~\ref{prop:tight}), but it is not a lower bound for truthful mechanisms in general. In addition, we show that optimizing the earlier surrogate exactly does not restore monotonicity (Section~\ref{sec:counterexample}), and we evaluate running time, welfare, and sensitivity to CTR errors on synthetic instances (Section~\ref{sec:synthetic}). Theorem~\ref{thm:main} thus improves the best previous factor for monotone allocations from $O(\log m)$ to $2$, with no restrictions on the number of displayed ads or on the CTR matrix.

\subsection{Related Work}\label{sec:related}

\begin{table*}[t]
\centering
\caption{Incentive and approximation guarantees for related cascade auctions under different click models.}
\label{tab:landscape}
\small
\setlength{\tabcolsep}{5pt}
\renewcommand{\arraystretch}{1.15}
\begin{tabularx}{\textwidth}{@{}>{\raggedright\arraybackslash}X>{\raggedright\arraybackslash}p{0.28\textwidth}>{\centering\arraybackslash}p{0.14\textwidth}@{}}
\toprule
Work & Incentive guarantee & Factor $C$ \\
\midrule
\multicolumn{3}{@{}l}{\textbf{Structured cascade models (APDC; factorized CTRs)}} \\
\addlinespace[2pt]
\hspace{0.7em}Gatti--Rocco~\cite{gatti2013mechanism}
& monotone; TIE
& $4/(1-\varepsilon)$ \\
\hspace{0.7em}Farina--Gatti~\cite{farina2017cascade} (equal slot prominences)
& MIR; DSIC
& $2$ \\
\addlinespace[6pt]
\multicolumn{3}{@{}l}{\textbf{Pair-specific first-click cascade (arbitrary $p_{ij}$)}} \\
\addlinespace[2pt]
\hspace{0.7em}Balseiro et al.~\cite{balseiro2026position} (restricted welfare)
& not established
& $4/(1-\varepsilon)$ \\
\hspace{0.7em}Balseiro et al.~\cite{balseiro2026position} (CTR buckets)
& monotone
& $O(\log m)$ \\
\hspace{0.7em}\textbf{This paper}
& \textbf{TIE; IR in expectation}
& \textbf{2} \\
\bottomrule
\end{tabularx}
\par\smallskip
{\footnotesize\raggedright
$C$: approximation factor (welfare at least $\OPT/C$). APDC: ad/position-dependent cascade; TIE: truthful in expectation; IR: individually rational; MIR: maximal in range; DSIC: dominant-strategy incentive compatible. Our factor $2$ is tight for the designated-last range.\par}
\end{table*}

\paragraph{Position auctions and cascade models.} Classical position auctions assume CTRs that factor into advertiser and position effects~\cite{varian2007position,edelman2007gsp}, and cascade and Markovian click models capture competition for sequential attention~\cite{craswell2008experimental,kempe2008cascade,aggarwal2008markovian}. For the ad/position-dependent cascade (APDC), in which a CTR is the product of an ad quality, a slot factor, and continuation probabilities that depend only on the ads above, Gatti and Rocco~\cite{gatti2013mechanism} and Farina and Gatti~\cite{farina2017cascade} give truthful constant-factor mechanisms (Table~\ref{tab:landscape}). In our model, $p_{ij}$ is an arbitrary matrix that also determines the continuation probability $1-p_{ij}$; APDC cannot represent even $p=\mathrm{diag}(\tfrac12,\tfrac12)$. Balseiro et al.~\cite{balseiro2026position} introduced pair-specific CTRs for AI-generated content; we adopt their cascade model. Other page-level externalities include spatial proximity~\cite{aggarwal2026spatial} and ad sizes under a page capacity~\cite{batziou2026capacity}, neither with first-click stopping.
\paragraph{Truthful randomized mechanisms.} Exact welfare maximization over a bid-independent distributional range is truthful in expectation~\cite{lavi2011truthful,dobzinski2013power}, and convex rounding~\cite{dughmi2016convex} applies this principle to roundings whose expected welfare is tractable. Attenuation~\cite{brubach2020attenuate} and contention resolution~\cite{chekuri2014cr,nuti2025matchingcr} also discard candidates to control interference, but they guarantee lower bounds on survival probabilities; our range yields exact expected cascade CTRs, which permits exact welfare optimization. Mechanisms for generative advertising use other allocation primitives~\cite{duetting2024mechanism,dubey2024auctions,xu2026adinsertion,jiang2026twomarkets,han2026quality,fang2026osda,elahi2026mirt}, and click-intent estimation~\cite{turnersmith2026pricing} is complementary to our assumption that $p_{ij}$ is given.

\paragraph{Concurrent and independent work.} We became aware of concurrent and independently developed work by Liu et al.~\cite{liu2026truthful}, first posted on arXiv on October 22, 2026. Both works establish polynomial-time truthful-in-expectation $2$-approximation mechanisms for pair-specific cascade auctions using randomized thinning and exact welfare optimization. The contributions differ in two respects. First, Liu et al.\ develop order-invariant thinning that accommodates a fixed editorial order, whereas our mechanism exploits control over display order through a designated-last lottery with an exactly optimizable expected-welfare objective. Second, we provide an explicit counterexample showing that even exact optimization of the restricted-welfare surrogate of Balseiro et al.~\cite{balseiro2026position} can violate monotonicity of cascade click probabilities (Section~\ref{sec:counterexample}). This differs from the counterexample in Liu et al., which concerns direct, unthinned rendering of an odds-optimized allocation. Our results were developed independently.

\section{Model}\label{sec:model}

There are $n$ advertisers $N=[n]$ and $m$ candidate positions $P=[m]$. Advertiser $i$ has a private value $v_i\ge0$ per click. A \emph{matching} $M\subseteq N\times P$ assigns each advertiser at most one position and each position at most one advertiser; at most $K\le m$ ads can be shown, and $\calM_K$ denotes the set of such matchings, including $\emptyset$. A \emph{display} is an ordered list of the edges of a matching in $\calM_K$. We write $e=(i,j)$ for an edge, $p_e=p_{ij}$, and $v_e=v_i$. Pairs that the platform excludes, for instance for brand safety, are removed from the bipartite graph, and all results apply to the remaining graph.

\paragraph{Display order.} As in prior work on this model~\cite{balseiro2026position}, the platform chooses the order in which the user examines the matched ads, and standalone CTRs do not depend on this order. We assume that the interface permits this; Algorithm~1 uses it to show one ad last. Layouts whose placements cannot be reordered require additional constraints (Section~\ref{sec:discussion}).

\paragraph{Click probabilities.} We distinguish three quantities.
\begin{itemize}[leftmargin=1.3em,itemsep=1pt,topsep=2pt]
\item The \textbf{standalone CTR} $p_{ij}\in[0,1]$ is the probability that ad $i$ is clicked at position $j$ when it is shown alone. It is fixed by the platform before bids are evaluated, for example as the output of a prediction model. The theorem assumes that this public matrix is the same CTR matrix governing user click behavior; Section~\ref{sec:noise} studies welfare under misspecification.
\item The \textbf{cascade CTR} $\pi_i(A)$ is advertiser $i$'s click probability given a realized display $A$; it is a probability, not a click indicator. If $A$ shows $e_1,\dots,e_\ell$ in this order, then
\begin{equation}\label{eq:cascade-edge}
  \pi_{e_k}(A)=p_{e_k}\prod_{h<k}(1-p_{e_h}),
\end{equation}
since the user reaches $e_k$ only if no earlier ad is clicked. Here and below, $\pi_e(A)=\pi_i(A)$ for a displayed edge $e=(i,j)$.
\item The \textbf{expected cascade CTR} $q_i=\E_{A\sim D}[\pi_i(A)]$ is the expected allocation of advertiser $i$ when a mechanism draws $A$ from a distribution $D$.
\end{itemize}
The \emph{cascade welfare} of a display $A$ is $\W(A)=\sum_iv_i\pi_i(A)$, and $\OPT(v)$ is the maximum cascade welfare over all $M\in\calM_K$ and all display orders.

\paragraph{Incentives.} We assume that advertisers are risk-neutral with quasi-linear expected utility $v_iq_i-T_i$, where $T_i$ is the expected payment, and that only $v_i$ is private; advertisers do not report $p$. A randomized allocation rule is \emph{monotone} if $q_i$ is nondecreasing in $v_i$ for every $i$ and $v_{-i}$. This is the standard condition for truthful single-parameter mechanisms~\cite{myerson1981optimal,archer2001truthful}, and it is the notion of monotonicity used in~\cite{balseiro2026position}.

\section{The Designated-Last Mechanism}\label{sec:mechanism}

We now define the lottery range. For a matching $M$, let
\[
  S(M)=\sum_{e\in M}p_e,\qquad A_v(M)=\sum_{e\in M}v_ep_e
\]
denote its total standalone CTR and its standalone welfare. For nonempty $M$ and a designated edge $d\in M$, define
\begin{equation}\label{eq:scale}
 c(M,d)=\frac{1}{1+S(M)-p_d},\qquad
 G_v(M,d)=\frac{A_v(M)}{1+S(M)-p_d},
\end{equation}
so that $G_v(M,d)=c(M,d)\,A_v(M)$.

For a fixed matching $M$, the numerator $A_v(M)$ does not depend on $d$, and the denominator decreases in $p_d$. Hence
\[
 d^\star(M)\in\arg\max_{e\in M}p_e:
\]
the best designated edge is a selected ad with the largest standalone CTR, independently of the bids, and the mechanism effectively chooses a matching.

\begin{algorithm}[t]
\caption{Designated-Last Cascade Mechanism}
\label{alg:dl}
\begin{algorithmic}[1]
\State Compute an exact maximizer $(M^\star,d^\star)$ of $G_v(M,d)$ over nonempty $M\in\calM_K$ and $d\in M$ (Section~\ref{sec:opt}).
\If{$G_v(M^\star,d^\star)=0$}
  \State \Return an empty display
\EndIf
\State Order $M^\star\setminus\{d^\star\}$ by a fixed public position order; append $d^\star$.
\State Denote the sequence by $e_1,\dots,e_\ell$ and set $c=c(M^\star,d^\star)$.
\For{$k=1,\dots,\ell$}
  \State Set $R_k=1-c\sum_{h<k}p_{e_h}$.
  \State Retain $e_k$ independently with probability $a_k=c/R_k$.
\EndFor
\State Display the retained ads in order and charge the payments of Section~\ref{sec:truth}.
\end{algorithmic}
\end{algorithm}

The retention probabilities depend only on the public CTRs and the selected pair, and every display is a subset of a matching in $\calM_K$. Figure~\ref{fig:lottery} illustrates the thinning and click process.

\begin{figure}[t]
\centering
\includegraphics[width=\columnwidth]{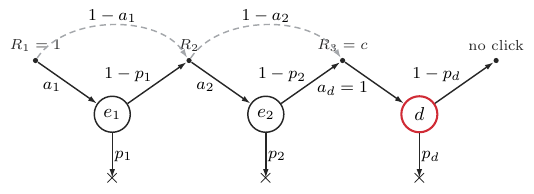}
\Description{A branching diagram for the selected sequence e1, e2, and designated-last d. Dashed arcs skip e1 and e2 with probabilities one minus a1 and one minus a2. Solid branches retain each ad with probability a, then either continue with probability one minus p or end at a cross with click probability p. Reach probabilities are R1 equals one, R2, and R3 equals c. The designated ad d is always retained and has a red outline. Passing d without a click ends at the no-click state.}
\caption{Designated-last thinning for a three-ad matching. Non-designated ads are independently retained with probabilities $a_k=c/R_k$; dashed arcs skip dropped ads. If reached, a retained ad is clicked with probability $p_k$, ending the cascade. Ad $d$ is always retained ($a_d=1$).}
\label{fig:lottery}
\end{figure}

\subsection{An exact cascade lottery}
Fix a nonempty matching $M$, an edge $d\in M$, and an order $e_1,\dots,e_\ell$ of $M$ with $e_\ell=d$; write $p_k=p_{e_k}$ and $c=c(M,d)$. Every set of edges that excludes $d$ has CTR mass at most $S(M)-p_d$, so $R_k\ge 1-c(S(M)-p_d)=c$. Hence $a_k=c/R_k\in[0,1]$ and $a_\ell=1$. Since $1/c=1+S(M)-p_d$, the retention probabilities have the suffix form
\begin{equation}\label{eq:suffix}
 a_k=\frac{1}{1+\sum_{h=k}^{\ell-1}p_h}:
\end{equation}
ad $e_k$ is kept with probability one over one plus the CTR mass of $e_k$ and the non-designated ads after it (the sum is empty for $k=\ell$). Let $D_{M,d}$ denote the resulting distribution over displays.

\begin{lemma}[Exact cascade lottery]\label{lem:lottery}
Under $D_{M,d}$, every $e_k\in M$ has expected cascade CTR $\E[\pi_{e_k}]=c(M,d)\,p_k$. Consequently, the expected cascade welfare of $D_{M,d}$ is $G_v(M,d)$, and the expected allocation does not depend on the order of $M\setminus\{d\}$.
\end{lemma}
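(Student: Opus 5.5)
The plan is to compute, by induction along the sequence $e_1,\dots,e_\ell$, the probability that the user \emph{reaches} position $k$ of the sequence without having clicked. Here ``reaching'' $e_k$ means that no retained earlier ad was clicked, and this event is defined whether or not $e_k$ itself is retained. I claim this reach probability equals $R_k=1-c\sum_{h<k}p_h$. For $k=1$ it is $1=R_1$.

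For the inductive step, the user passes $e_k$ without stopping unless $e_k$ is retained and then clicked. Retention of $e_k$ is an independent coin, and the click at $e_k$ depends only on the user's behavior at that ad. Both are therefore independent of the event of reaching $e_k$, which is determined by earlier coins and clicks. Hence
\[
R_{k+1}=R_k\,(1-a_kp_k)=R_k-c\,p_k ,
\]
using $a_k=c/R_k$, and this matches the claimed formula for $R_{k+1}$.

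With the reach probability in hand, the expected cascade CTR of $e_k$ is the product of three factors: the probability of reaching it, the probability it is retained, and $p_k$. That gives $R_k\cdot a_k\cdot p_k=c\,p_k$. This uses the identity $\pi_{e_k}(A)=p_k\prod(1-p_{e_h})$ from \eqref{eq:cascade-edge}, taken over the retained earlier ads. To justify it formally, I would write $\E[\pi_{e_k}]$ as the expectation over the retention coins of $\one[e_k\text{ retained}]\,p_k\prod_{h<k,\,\text{retained}}(1-p_h)$. The coins are independent, so this factorizes as
\[
a_kp_k\prod_{h<k}(1-a_hp_h).
\]
The recursion above shows $\prod_{h<k}(1-a_hp_h)=R_k$, which gives $c\,p_k$ directly. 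This telescoping is the cleanest route and avoids any conditioning subtleties.

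Welfare then follows by linearity. We have
\[
\E[\W]=\sum_{e\in M}v_e\,c\,p_e=c(M,d)\,A_v(M)=G_v(M,d).
\]
Since $c\,p_k$ depends only on $M$ and $d$, not on the order chosen for $M\setminus\{d\}$, the expected allocation is order-independent. The only step needing care is the well-definedness $a_k\in[0,1]$, already established before the lemma, together with $a_\ell=1$ for the designated edge. No step is a real obstacle: the identity $R_k(1-a_kp_k)=R_k-cp_k$ is the crux, and the retention probabilities were defined precisely to make it hold.
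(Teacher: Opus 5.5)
Your proposal is correct and is essentially the paper's proof: the same induction showing that the reach probability equals $R_k$ via $R_k(1-a_kp_k)=R_k-c\,p_k$, followed by $R_k a_k p_k=c\,p_k$ and linearity of expectation. Your explicit factorization of $\E[\pi_{e_k}]$ over the independent retention coins renders the same step a bit more carefully, and it matches the paper's definition of $\pi$ as a click probability conditional on the realized display.
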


\begin{proof}
Let $\widehat R_k$ be the probability that the user reaches position $k$ of the sequence, that is, that no retained earlier ad is clicked. Retention and clicks are independent across ads, so $\widehat R_{k+1}=\widehat R_k(1-a_kp_k)$. If $\widehat R_k=R_k$, then $\widehat R_{k+1}=R_k-cp_k=R_{k+1}$; since $\widehat R_1=R_1=1$, induction gives $\widehat R_k=R_k$ for all $k$. Therefore $\E[\pi_{e_k}]=a_k\,p_k\,\widehat R_k=(c/R_k)\,p_k\,R_k=c\,p_k$, and summing $v_e\,\E[\pi_e]$ over $e\in M$ gives $G_v(M,d)$.
\end{proof}

For the three-ad matching of Figure~\ref{fig:lottery} with $p=(0.5,0.4,0.6)$, deterministic display gives cascade CTRs $(0.5,0.2,0.18)$, whereas the lottery gives $\tfrac{10}{19}(0.5,0.4,0.6)$. Thinning sacrifices some clicks, but the expected welfare of every lottery in the range has a closed form for all bids, and this is the quantity that the mechanism maximizes. Attenuation schemes guarantee lower bounds on survival probabilities~\cite{brubach2020attenuate}; here the identity $\E[\pi_e]=c(M,d)\,p_e$ is exact.

Among lotteries that fix the last ad in advance, the scale $c(M,d)$ cannot be improved.

\begin{proposition}[Largest common scale under a fixed designated-last constraint]\label{prop:maxscale}
Fix $M$ and $d\in M$ with $p_d>0$. Consider any lottery over displays that use only edges of $M$ in which every realization containing $d$ places $d$ after all other displayed edges. If every $e\in M$ has $\E[\pi_e]=c\,p_e$ for a common $c$, then $c\le c(M,d)$.
\end{proposition}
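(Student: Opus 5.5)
The plan is to look only at the designated edge $d$ and use the fact that a last-placed ad receives exactly the click mass that all earlier ads leave over. This gives one linear inequality in $c$, and rearranging it yields the bound.

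\emph{Step 1: a pointwise identity.} Fix a realized display $A$ in the support of the lottery, using only edges of $M$. By \eqref{eq:cascade-edge}, the events ``the user clicks the $k$-th displayed ad'' are disjoint. So the probability that the user reaches a given point equals one minus the sum of the cascade CTRs of the ads displayed before it. If $d\in A$, then by hypothesis $d$ is displayed after every other edge of $A$, and therefore
\[
 \pi_d(A)=p_d\Bigl(1-\sum_{e\in A\setminus\{d\}}\pi_e(A)\Bigr).
\]
If $d\notin A$, then $\pi_d(A)=0$. In that case $\sum_{e\in A}\pi_e(A)\le 1$, because the clicks are disjoint events, so the right-hand side above (with the sum over $e\in A$ and $\pi_e(A)=0$ for undisplayed $e$) is still nonnegative. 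In both cases we get
\[
 \pi_d(A)\le p_d\Bigl(1-\sum_{e\in M\setminus\{d\}}\pi_e(A)\Bigr).
\]

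\emph{Step 2: take expectations.} Averaging the inequality over the lottery and using linearity of expectation together with $\E[\pi_e]=c\,p_e$ for every $e\in M$ gives
\[
 c\,p_d\le p_d\bigl(1-c\,(S(M)-p_d)\bigr).
\]
Since $p_d>0$, we divide by $p_d$ and obtain $c\,(1+S(M)-p_d)\le 1$, that is, $c\le c(M,d)$ by \eqref{eq:scale}.

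\emph{Main obstacle.} The only delicate point is the case $d\notin A$. There we must check that the inequality $\pi_d(A)\le p_d(1-\sum\pi_e(A))$ still holds, which amounts to the fact that total click probability in a single display is at most one. We must also make sure that the constraint ``$d$ is last whenever it is displayed'' is used only on realizations that contain $d$. No independence or structure of the lottery is needed. Lemma~\ref{lem:lottery} shows that $D_{M,d}$ attains this bound with equality: every realization contains $d$, and $d$ is reached with probability exactly $c$.
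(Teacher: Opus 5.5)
Your proof is correct and follows essentially the same route as the paper. Both use the disjointness of first-click events, the observation that a last-placed $d$ is reached exactly when no other displayed ad is clicked, and linearity of expectation to obtain $c\,p_d\le p_d\bigl(1-c(S(M)-p_d)\bigr)$. Your realization-by-realization argument, with the case $d\notin A$ handled separately, spells out what the paper's one-line argument leaves implicit.
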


\begin{proof}
Since the user leaves after the first click, clicks on different ads are disjoint events. Ad $d$ can be clicked only if no other displayed ad is clicked, and then with probability at most $p_d$. Hence $c\,p_d=\E[\pi_d]\le p_d\bigl(1-\sum_{e\ne d}c\,p_e\bigr)=p_d\bigl(1-c(S(M)-p_d)\bigr)$, which rearranges to $c\le c(M,d)$.
\end{proof}

Two remarks delimit the range. First, the symmetric scale $1/(1+S(M))$, which needs no designated edge, is never larger:
\begin{equation}\label{eq:symmetric}
 F_v(M)=\frac{A_v(M)}{1+S(M)}\le G_v(M,d)\qquad\text{for every }d\in M.
\end{equation}
We use $F_v$ only as a benchmark in Section~\ref{sec:approx}. Second, Proposition~\ref{prop:maxscale} does not apply to lotteries that randomize which ad is last. Showing two ads with $p=\tfrac12$ in uniformly random order gives each an expected cascade CTR of $\tfrac34p$, which exceeds the designated-last scale $\tfrac23$. Whether such ranges admit exact polynomial-time optimization is open (Section~\ref{sec:discussion}).

\section{Exact Optimization in Polynomial Time}\label{sec:opt}

Step~1 of Algorithm~1 maximizes $G_v$ over all pairs $(M,d)$. We fix each of the at most $nm$ designated edges $d$ in turn and maximize a ratio of linear functions over the matchings that contain $d$. The Charnes--Cooper transformation linearizes the ratio, and a convex-combination argument shows that the resulting LP relaxation is exact.

Let $Q_K=\{x\ge0:\sum_jx_{ij}\le1\ \forall i,\ \sum_ix_{ij}\le1\ \forall j,\ \sum_{i,j}x_{ij}\le K\}$ be the cardinality-constrained bipartite matching polytope. It is integral. To see this, consider the network with a source $o$, a node for each advertiser and each position, and a sink $\omega$, with arcs $(o,i)$, $(i,j)$, and $(j,\omega)$ of capacity $1$ and a return arc $(\omega,o)$ of capacity $K$. Every $x\in Q_K$ extends to a feasible circulation by sending $\sum_jx_{ij}$ on $(o,i)$, $\sum_ix_{ij}$ on $(j,\omega)$, and $\sum_{i,j}x_{ij}$ on $(\omega,o)$, and every feasible circulation restricts to a point of $Q_K$. Circulation polytopes with integral capacities are integral, and integral circulations restrict to matchings with at most $K$ edges. If the positions are partitioned into sections $g$ with caps $K_g$, one node per section with capacity $K_g$ is added between its positions and $\omega$; integrality is preserved, and subsets of feasible sets remain feasible, so Lemma~\ref{lem:prefix} and Theorem~\ref{thm:approx} also apply unchanged. Its face $Q_K(d)=\{x\in Q_K:x_d=1\}$ is therefore integral, and its vertices are the matchings in $\calM_K$ that contain $d$. For $x\in Q_K(d)$ let
\begin{equation}\label{eq:fixed-d-ratio}
 G_{v,d}(x)=\frac{\sum_e v_e p_e x_e}{1+\sum_{e\ne d}p_e x_e}.
\end{equation}
Since the denominator is at least $1$, the substitution $z=tx$, $t=1/(1+\sum_{e\ne d}p_ex_e)$~\cite{charnes1962programming} yields the polynomial-size linear program
\begin{equation}\label{eq:lp}
\begin{aligned}
\max\ &\textstyle\sum_e v_ep_ez_e\\
\text{s.t. }&\textstyle\sum_jz_{ij}\le t\ \forall i,\quad \sum_iz_{ij}\le t\ \forall j,\quad \sum_ez_e\le Kt,\\
&\textstyle z_d=t,\quad t+\sum_{e\ne d}p_ez_e=1,\quad z,t\ge0.
\end{aligned}
\end{equation}
Every feasible $(z,t)$ has $t>0$, because $t=0$ forces $z=0$ and violates the normalization. Hence $x=z/t\in Q_K(d)$ has the same objective value, and conversely every $x\in Q_K(d)$ gives a feasible point.

\paragraph{No fractional gap.} Let $x=\sum_{M\ni d}\xi_M\one_M$ be a convex combination of incidence vectors $\one_M$ of matchings that contain $d$, and let $\Delta_M=1+S(M)-p_d>0$. Because $A_v$ and $S$ are linear,
\begin{equation}\label{eq:convex-ratio}
\begin{aligned}
G_{v,d}(x)&=\frac{\sum_M\xi_M A_v(M)}{\sum_M\xi_M \Delta_M}
=\sum_{M\ni d}\underbrace{\frac{\xi_M\Delta_M}{\sum_{M'}\xi_{M'}\Delta_{M'}}}_{\mu_M}\,G_v(M,d)\\
&\le\max_{M\ni d}G_v(M,d).
\end{aligned}
\end{equation}
The fractional objective is thus a convex combination of integral objectives, and the LP optimum $r_d$ equals the best value of a matching that contains $d$.

\paragraph{Recovery.} For $M\ni d$, the inequality $G_v(M,d)\le r_d$ is equivalent to $\sum_{e\in M\setminus\{d\}}(v_e-r_d)p_e\le r_d-v_dp_d$, with equality exactly for optimal $M$. Hence a maximum-weight matching with at most $K-1$ edges, in the graph without $d$'s advertiser and position and with weights $(v_e-r_d)p_e$, together with $d$ is optimal.

\begin{theorem}[Exact optimization]\label{thm:poly}
For rational inputs, a maximizing pair $(M^\star,d^\star)$ and the value
\begin{equation}\label{eq:phidef}
 \Phi(v)=\max\Bigl\{0,\ \max_{M\in\calM_K\setminus\{\emptyset\},\,d\in M}G_v(M,d)\Bigr\}
\end{equation}
are computable in deterministic polynomial time, and so are the expected allocation $q(v)$ and the expected payments of Section~\ref{sec:truth}.
\end{theorem}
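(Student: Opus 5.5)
The plan is to assemble the pieces already established in Section~\ref{sec:opt}. For each of the at most $nm$ candidate designated edges $d$ with nonempty feasibility (i.e., $K\ge1$ and $d$ present in the graph), we solve the linear program~\eqref{eq:lp}. Its size is polynomial: $O(nm)$ variables and $O(n+m)$ constraints, with rational coefficients drawn from $v$, $p$ and $K$. It is feasible: take $x=\one_{\{d\}}$, $t=1/(1+0)=1$. It is bounded: $z\le t\le 1$ coordinatewise. An exact optimal basic solution with rational entries of polynomial bit length is therefore obtained in deterministic polynomial time, for instance by the ellipsoid method or an interior-point method followed by rounding to a vertex.

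By the Charnes--Cooper correspondence, the LP value equals $\max_{x\in Q_K(d)}G_{v,d}(x)$. By the no-fractional-gap identity~\eqref{eq:convex-ratio}, this maximum equals $r_d=\max_{M\ni d}G_v(M,d)$.

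To recover an integral maximizer, we use the recovery paragraph. Compute a maximum-weight matching with at most $K-1$ edges in the graph with $d$'s advertiser and position removed, using weights $(v_e-r_d)p_e$. This is polynomial, for example as a min-cost flow on the network used to show integrality of $Q_K$, with the return arc capacity set to $K-1$. Adding $d$ to this matching gives $M_d$. The equivalence stated there shows that $G_v(M_d,d)\ge r_d$, hence equality. Taking $d^\star$ maximizing $r_d$ and $M^\star=M_{d^\star}$, and comparing with $0$, yields $\Phi(v)$ exactly as a rational number. A fixed deterministic tie-breaking rule is used, for example lexicographic order on $(d,M)$; for $M$ this means choosing the lexicographically first optimal matching, which can be achieved by perturbed weights or by iteratively fixing edges. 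This makes the output a well-defined function of $v$.

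The expected allocation then follows from Lemma~\ref{lem:lottery}: $q_i(v)=c(M^\star,d^\star)\,p_e$ if $i$ is matched via $e\in M^\star$, and $q_i(v)=0$ otherwise. This is computable in closed form.

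For payments, we plan to use the Myerson formula $T_i=v_iq_i(v)-\int_0^{v_i}q_i(u,v_{-i})\,du$. The key observation is that $q_i(\cdot,v_{-i})$ is a step function. As $u$ varies, $\Phi$ is the upper envelope of the functions $u\mapsto G_{(u,v_{-i})}(M,d)$, which are affine in $u$. The allocation to $i$ is determined by which $(M,d)$ attains the maximum. Moreover, $q_i=c(M,d)p_e$ equals the slope of the active line. Hence $\int_0^{v_i}q_i\,du=\Phi(v_i,v_{-i})-\Phi(0,v_{-i})$, up to tie-breaking on a measure-zero set. Consequently $T_i=v_iq_i(v)-\Phi(v)+\Phi(0,v_{-i})$, which requires only two calls to the optimization above. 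This is the step I expect to need the most care. The argument relies on the envelope theorem for a convex, piecewise-linear $\Phi$ in $u$ with finitely many pieces, together with the claim that the selected lottery's slope equals $q_i$. That claim holds because the welfare of $D_{M,d}$ is $\sum_e v_e\,c(M,d)p_e$. This sidesteps any need to enumerate breakpoints, whose number could otherwise be exponential.
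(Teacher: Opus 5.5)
Your proposal is correct and follows the paper's proof. Like the paper, you solve LP~\eqref{eq:lp} exactly for each designated edge, use the no-fractional-gap identity and the max-weight-matching recovery step, read $q(v)$ off Lemma~\ref{lem:lottery}, and compute payments as $T_i=\Phi(0,v_{-i})-\Phi(v)+v_iq_i(v)$ with one extra optimization per advertiser; your envelope derivation of that formula from Myerson's integral is the identity the paper checks in Section~\ref{sec:truth}, and your feasibility, boundedness and tie-breaking remarks only fill in details the paper leaves implicit.
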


\begin{proof}
For each of the at most $nm$ edges $d$, solve LP~\eqref{eq:lp} exactly. Its coefficients are rational with polynomial encoding length, so $r_d$ is a rational number of polynomial bit length that standard rational linear-programming methods compute exactly in polynomial time~\cite{schrijver1986theory}. One maximum-weight matching computation then recovers an optimal $M\ni d$, and the best pair over all $d$ is returned. By Lemma~\ref{lem:lottery}, $q_i(v)=c(M^\star,d^\star)\,p_{e(i)}$ if $i$ is matched by an edge $e(i)\in M^\star$, and $q_i(v)=0$ otherwise. Payments require one further optimization per selected advertiser (Section~\ref{sec:truth}).
\end{proof}

\paragraph{Computational model.} Optimization (Theorem~\ref{thm:poly}) runs in deterministic polynomial time in the Turing model with respect to the input bit length, and payments require at most $K$ further optimizations. Sampling uses at most $K$ independent Bernoulli draws whose parameters $a_k$ are rationals of polynomial bit length. To draw a Bernoulli$(a)$ variable exactly for rational $a$, we generate fair bits $\beta_1,\beta_2,\ldots$ and compare them with the binary digits of $a$, which long division produces one at a time, until the two sequences first differ, and output $1$ if the random bit is the smaller one. Each comparison stops with probability $1/2$, so a draw uses two bits in expectation, and each digit costs time polynomial in the bit length of $a$ and the current index. Sampling therefore takes expected polynomial time; we claim no worst-case bound. Ties in Step~1 may be broken by any fixed rule, since the monotonicity proof holds for every exact maximizer; this is also why exact rational arithmetic matters. Appendix~\ref{app:opt} discusses practical implementation.

\section{A Factor-Two Approximation}\label{sec:approx}

For any fixed matching, an adjacent-swap argument shows that displaying the ads in nonincreasing order of value maximizes cascade welfare. Algorithm~1 does not use this order: by Lemma~\ref{lem:lottery}, its expected allocation does not depend on the order of the non-designated ads.

We compare $\OPT$ with the symmetric benchmark
\[
 r_0(v)=\max_{M\in\calM_K}F_v(M),\qquad F_v(M)=\frac{A_v(M)}{1+S(M)},
\]
and then apply~\eqref{eq:symmetric}. Fix a feasible display $A$ with edges $e_1,\dots,e_\ell$ in this order, and write $p_k=p_{e_k}$ and $v_k=v_{e_k}$. Let $\rho_k=\prod_{h<k}(1-p_h)$ be the probability that the user reaches $e_k$, so that $\W(A)=\sum_k\rho_kv_kp_k$. The key observation is that $\W(A)$ equals the expected standalone welfare of a random prefix of the display. Let $P_t=\{e_1,\dots,e_t\}$ be drawn with probability
\[
 \lambda_t=\rho_t-\rho_{t+1}=\rho_tp_t\ \ (t<\ell),\qquad \lambda_\ell=\rho_\ell.
\]
\begin{figure}[t]
\centering
\includegraphics[width=\columnwidth]{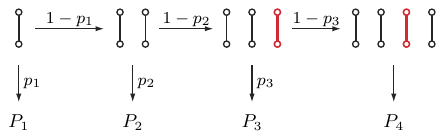}
\Description{Four matching-prefix states contain one, two, three, and four separate edges. At each of the first three states, a downward arrow returns the current prefix Pt with conditional probability pt, while a horizontal arrow appends the next edge with probability one minus pt. The fourth state returns P4 unconditionally. The third edge is red in both P3 and P4, so its inclusion probability is lambda3 plus lambda4, equal to rho3.}
\caption{Random-prefix decomposition for four matching edges, with $P_t$ the prefix of length $t$. Upon reaching $P_t$ ($t<4$), the process stops with probability $p_t$ or appends the next edge; $P_4$ is terminal. The red edge $e_3$ lies in $P_3$ and $P_4$, so it is included with probability $\lambda_3+\lambda_4=\rho_3$. This process is used only in the approximation proof, not by Algorithm~1.}
\label{fig:random-prefix}
\end{figure}
These weights form a distribution and satisfy $\sum_{t\ge k}\lambda_t=\rho_k$: edge $e_k$ belongs to the random prefix with exactly the probability that the user reaches it (Figure~\ref{fig:random-prefix}).

\begin{lemma}[Prefix decomposition]\label{lem:prefix}
The random prefix satisfies
\[
\textstyle\sum_t\lambda_tA_v(P_t)=\W(A)\quad\text{and}\quad\sum_t\lambda_tS(P_t)=1-\prod_{k\le\ell}(1-p_k)\le1.
\]
\end{lemma}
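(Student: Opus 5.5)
Both identities follow from exchanging the order of summation, using the tail identity $\sum_{t\ge k}\lambda_t=\rho_k$ stated just before the lemma. I will verify that identity first, then compute each sum.

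\emph{Step 1: the tail identity.} Since $\lambda_t=\rho_t-\rho_{t+1}$ for $t<\ell$ and $\lambda_\ell=\rho_\ell$, the sum $\sum_{t=k}^{\ell}\lambda_t$ telescopes to $\rho_k$. Taking $k=1$ and $\rho_1=1$ gives $\sum_t\lambda_t=1$. Every $\lambda_t\ge0$, because $\rho_tp_t\ge0$ and $\rho_\ell\ge0$. So $(\lambda_t)$ is a probability distribution, and $\Pr[e_k\in P_t]=\rho_k$.

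\emph{Step 2: the first identity.} Since $A_v(P_t)=\sum_{k\le t}v_kp_k$,
\[
\textstyle\sum_t\lambda_tA_v(P_t)=\sum_k v_kp_k\sum_{t\ge k}\lambda_t=\sum_k\rho_kv_kp_k=\W(A).
\]
The last equality is~\eqref{eq:cascade-edge} summed with weights $v_k$.

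\emph{Step 3: the second identity.} The same exchange gives $\sum_t\lambda_tS(P_t)=\sum_k\rho_kp_k$. Because $\rho_kp_k=\rho_k-\rho_{k+1}$, this sum telescopes to $\rho_1-\rho_{\ell+1}=1-\prod_{k\le\ell}(1-p_k)$. This quantity is at most $1$ because each $p_k\in[0,1]$, so the product is nonnegative. Equivalently, $\sum_k\rho_kp_k$ is the probability that some ad in $A$ is clicked, since the click events are disjoint.

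There is no real obstacle here. The only care needed is the boundary term: $\lambda_\ell$ is $\rho_\ell$ rather than $\rho_\ell p_\ell$. This matters only for the normalization and tail identity, not for the final telescoping in Step~3, which uses $\rho_kp_k=\rho_k-\rho_{k+1}$ for all $k\le\ell$, with $\rho_{\ell+1}:=\prod_{k\le\ell}(1-p_k)$.
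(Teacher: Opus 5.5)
Your proof is correct and follows the paper's argument: exchange the order of summation using $\sum_{t\ge k}\lambda_t=\rho_k$, then telescope $\sum_k\rho_kp_k=\sum_k(\rho_k-\rho_{k+1})=1-\rho_{\ell+1}$. Your explicit check of the tail identity and of $\lambda_t\ge0$ only fills in what the paper asserts in the text just before the lemma.
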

\begin{proof}
Exchanging the order of summation gives
\[
\textstyle\sum_t\lambda_tA_v(P_t)=\sum_kv_kp_k\sum_{t\ge k}\lambda_t=\sum_k\rho_kv_kp_k=\W(A).
\]
Similarly, $\sum_t\lambda_tS(P_t)=\sum_k\rho_kp_k=\sum_k(\rho_k-\rho_{k+1})=1-\rho_{\ell+1}$, where $\rho_{\ell+1}=\prod_{k\le\ell}(1-p_k)$.
\end{proof}

\begin{theorem}[Approximation]\label{thm:approx}
For every $v$, Algorithm~1 has expected cascade welfare $\Phi(v)$, and $\OPT(v)\le2\,r_0(v)\le2\,\Phi(v)$.
\end{theorem}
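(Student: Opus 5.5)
The plan is to prove the three assertions of Theorem~\ref{thm:approx} separately: that Algorithm~1 delivers expected welfare exactly $\Phi(v)$, that $\OPT(v)\le 2r_0(v)$, and that $r_0(v)\le\Phi(v)$. The middle inequality is the only substantive step, and it follows by combining Lemma~\ref{lem:prefix} with the definition of $r_0$.

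\emph{Welfare of the algorithm.} There are two cases. If $G_v(M^\star,d^\star)=0$, Algorithm~1 returns the empty display, with welfare $0$. In that case $\Phi(v)=0$, because every $G_v(M,d)$ is nonnegative (values and CTRs are nonnegative and the denominator is at least $1$), so the maximum is $0$. Otherwise the algorithm samples from $D_{M^\star,d^\star}$. By Lemma~\ref{lem:lottery} the expected cascade welfare is then $G_v(M^\star,d^\star)$, and by Theorem~\ref{thm:poly} this is the maximum in \eqref{eq:phidef}, so it equals $\Phi(v)$.

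\emph{The factor two.} Fix an optimal display $A$ with edges $e_1,\dots,e_\ell$ in display order, so $\W(A)=\OPT(v)$. If $A$ is empty the bound is trivial, so assume $\ell\ge1$ and use the random-prefix weights $\lambda_t$.
\begin{itemize}
\item Each prefix $P_t$ is a subset of a matching in $\calM_K$, so it is itself a matching in $\calM_K$ and is feasible for $r_0$.
\item For every feasible $M$, the definition of $F_v$ gives $A_v(M)=F_v(M)\,(1+S(M))\le r_0(v)\,(1+S(M))$.
\item Applying this to each $P_t$, summing with weights $\lambda_t$, and using both identities of Lemma~\ref{lem:prefix}, we get
\[
\OPT(v)=\W(A)=\sum_t\lambda_tA_v(P_t)\le r_0(v)\Bigl(\sum_t\lambda_t+\sum_t\lambda_tS(P_t)\Bigr)\le r_0(v)(1+1)=2\,r_0(v).
\]
\end{itemize}
The key point is that the expected prefix CTR mass is at most $1$, while the weights $\lambda_t$ sum to $1$. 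Hence the denominator $1+S$ costs at most a factor $2$ on average, even though an individual prefix may have $S(P_t)$ much larger than $1$.

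\emph{Comparing benchmarks.} To show $r_0(v)\le\Phi(v)$, consider the maximizer $M$ of $F_v$. If $M=\emptyset$, then $F_v(M)=0\le\Phi(v)$. Otherwise \eqref{eq:symmetric} gives $F_v(M)\le G_v(M,d)\le\Phi(v)$ for any $d\in M$.

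The main obstacle is conceptual rather than computational. One needs to bound a cascade objective, which is not linear in the displayed set, by a ratio benchmark. Lemma~\ref{lem:prefix} resolves this by rewriting $\W(A)$ as an average of linear standalone welfares $A_v(P_t)$ over feasible sub-matchings, while controlling the average of $S(P_t)$. Once that decomposition is in hand, the remaining work is bookkeeping. The only care needed is that prefixes are feasible, which holds because $\calM_K$ is closed under subsets, and that empty sets are handled correctly.
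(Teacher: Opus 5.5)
Your proof is correct and follows the paper's argument. You bound each random prefix $P_t$ by $r_0(v)(1+S(P_t))$, average with the weights $\lambda_t$ via Lemma~\ref{lem:prefix}, and then invoke \eqref{eq:symmetric} and Lemma~\ref{lem:lottery}; your explicit handling of the degenerate cases (an empty optimal display, $G_v(M^\star,d^\star)=0$, and an empty maximizer of $F_v$) is a harmless refinement of that route.
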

\begin{proof}
Each prefix $P_t$ belongs to $\calM_K$, so $A_v(P_t)\le r_0(v)\,(1+S(P_t))$ by the definition of $r_0$. Averaging over $t$ with weights $\lambda_t$ and applying Lemma~\ref{lem:prefix} gives $\W(A)\le r_0(v)(1+1)=2r_0(v)$. Since the display was arbitrary, $\OPT(v)\le2r_0(v)$. Finally, $r_0(v)\le\Phi(v)$ by~\eqref{eq:symmetric}, and $\Phi(v)$ is the expected cascade welfare of Algorithm~1 by Lemma~\ref{lem:lottery}.
\end{proof}

The symmetric ratio serves only as a certificate; designating the last ad raises welfare on the tested synthetic instances (Section~\ref{sec:synthetic}) but not in the worst case.

\begin{proposition}[Factor 2 is tight for the designated-last range]\label{prop:tight}
For every $\delta>0$ there is an instance with $\OPT(v)>(2-\delta)\,\Phi(v)$.
\end{proposition}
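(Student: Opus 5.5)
The plan is to build an explicit instance in which the prefix bound of Theorem~\ref{thm:approx} is nearly tight at both steps. First, every prefix of the optimal display should have ratio $A_v(P_t)/(1+S(P_t))$ almost equal to $r_0$. Second, the total click mass $\sum_t\lambda_tS(P_t)$ should be almost $1$. In addition, the designated edge should have negligible CTR, so that $G_v(M,d)\approx F_v(M)$ for every candidate pair and designation gains nothing.

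A continuous heuristic guides the choice. Think of ads with infinitesimal CTR, with values $v(s)$ indexed by cumulative CTR mass $s$. Then the cascade welfare is about $\int v(s)e^{-s}\,ds$. The range value is about $\sup_S \int_0^S v/(1+S)$, where we use the top-valued ads. To make $\int_0^S v = r(1+S)$ for every $S$, we take $v\equiv1$ plus an atom of mass $1$ at $s=0$. The atom is realized as a single ad with tiny CTR and huge value.

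\textbf{Instance.} Fix small $\delta,\varepsilon>0$ and an integer $N$. Advertiser $0$ has value $1/\delta$, and advertisers $1,\dots,N$ have value $1$. There are $N+1$ positions and $K=N+1$. The CTR matrix is diagonal: $p_{00}=\delta$, $p_{ii}=\varepsilon$ for $i\ge1$, and all other entries are $0$. (Zero-CTR edges only add terms of zero value, so we may ignore them.)

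\textbf{Lower bound on $\OPT$.} Display ad $0$ first and then ads $1,\dots,N$. Its cascade welfare is
\[
\delta\cdot\tfrac1\delta+(1-\delta)\bigl(1-(1-\varepsilon)^N\bigr),
\]
which tends to $2$ as $\delta\to0$ and $N\varepsilon\to\infty$. We take, say, $\varepsilon=1/N^{1/2}$ and $N\to\infty$.

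\textbf{Upper bound on $\Phi$.} Take any nonempty matching $M\subseteq\{0,\dots,N\}$ (diagonal edges) and any $d\in M$. Ad $0$ contributes $1$ to $A_v(M)$, and each other ad contributes $\varepsilon$, which is also its CTR. Hence $A_v(M)\le 1+S(M)$. Since every $p_d\le\max(\delta,\varepsilon)=:\eta$, the denominator satisfies $1+S(M)-p_d\ge(1-\eta)(1+S(M))$. Therefore $G_v(M,d)\le 1/(1-\eta)$, and so $\Phi(v)\le1/(1-\eta)$.

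Combining the two bounds, $\OPT(v)/\Phi(v)\ge(1-\eta)\bigl(1+(1-\delta)(1-(1-\varepsilon)^N)\bigr)\to2$. Choosing the parameters small or large enough then gives $\OPT(v)>(2-\delta')\Phi(v)$ for any prescribed $\delta'$; we rename $\delta'$ to match the statement's $\delta$.

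The step to check most carefully is the uniform bound $A_v(M)\le 1+S(M)$ over all matchings. With the diagonal CTRs this is immediate, but it is what ensures that no clever choice of $(M,d)$, including matchings that omit ad $0$, beats $1/(1-\eta)$.
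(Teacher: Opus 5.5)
Your proof is correct and uses essentially the paper's construction: a diagonal instance with one high-value ad of standalone welfare $1$ shown first, followed by many unit-value low-CTR ads. The only difference is that the paper gives every ad the same CTR $\eta$, so each diagonal designation cancels exactly one $\eta$ term and $\Phi=1$ exactly. You instead use two CTR levels and the uniform bound $\Phi\le 1/(1-\eta)$ obtained from $A_v(M)\le 1+S(M)$ and $p_d\le\eta$, which gives the same limit and also covers matchings without ad $0$ and zero-CTR designations.
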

\begin{proof}
Fix a CTR level $\eta\in(0,1)$ and take $n$ advertisers and $n$ positions with $K=n$, $p_{ii}=\eta$, $p_{ij}=0$ for $i\ne j$, $v_1=1/\eta$, and $v_i=1$ for $i\ge2$. A matching that contains $(1,1)$ and $\ell-1$ other diagonal edges has $A_v=1+(\ell-1)\eta$; designating a diagonal edge gives denominator $1+(\ell-1)\eta$ and ratio $1$, and designating a zero-CTR edge only enlarges the denominator. Matchings without $(1,1)$ have ratio at most $\ell\eta/(1+(\ell-1)\eta)\le1$. Hence $\Phi=1$. Showing advertiser~1 first and then the other diagonal edges gives
\[
 \OPT\ \ge\ 1+\sum_{k=2}^n\eta(1-\eta)^{k-1}=1+(1-\eta)\bigl(1-(1-\eta)^{n-1}\bigr),
\]
which tends to $2$ as $n\to\infty$ and then $\eta\to0$.
\end{proof}
In this instance the optimal display shows the high-value, low-CTR ad first and then collects almost one more unit of welfare from many low-value ads, whereas every lottery in the range scales all selected ads by the same factor. Proposition~\ref{prop:tight} concerns our range only; it is not a lower bound for monotone mechanisms in general.

\section{Monotonicity and Payments}\label{sec:truth}

For each pair $(M,d)$, the lottery $D_{M,d}$ depends only on the public CTRs and the public order. The lottery range
\[
 \mathcal D=\{D_{M,d}:M\in\calM_K\setminus\{\emptyset\},\,d\in M\}\cup\{D_\emptyset\}
\]
is therefore independent of the bids; $D_\emptyset$ is the lottery that displays no ad. Write $q_i(X)$ for advertiser $i$'s expected cascade CTR under $X\in\mathcal D$. By Lemma~\ref{lem:lottery}, Algorithm~1 selects
\begin{equation}\label{eq:maxrange}
 X(v)\in\arg\max_{X\in\mathcal D}\ \sum_iv_iq_i(X),
\end{equation}
that is, it maximizes the true expected cascade welfare over $\mathcal D$. The monotonicity argument below is the standard one for exact maximization over a fixed range; it applies here because Lemma~\ref{lem:lottery} makes the optimized objective equal to the delivered welfare.

\begin{theorem}[Monotonicity]\label{thm:mono}
Fix $v_{-i}$ and two bids $v_i<v_i'$ for advertiser $i$, and let $X$ and $X'$ be any exact maximizers in~\eqref{eq:maxrange} at $(v_i,v_{-i})$ and $(v_i',v_{-i})$, respectively. Then $q_i(X')\ge q_i(X)$.
\end{theorem}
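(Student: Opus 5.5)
The argument is the standard exchange (cyclic-monotonicity) argument for exact maximization over a bid-independent range. The key enabling facts are already in place. First, the range $\mathcal D$ does not depend on the bids. Second, by Lemma~\ref{lem:lottery}, the objective in~\eqref{eq:maxrange} is linear in the bid vector: each $X\in\mathcal D$ has a fixed allocation vector $q(X)$, and its value is $\sum_k v_k q_k(X)$. In particular $q_i(X)$ is a public quantity, equal to $c(M,d)\,p_{e(i)}$ or $0$, and it does not vary with $v$.

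The plan is to write down the two optimality conditions and add them. Because $X$ maximizes at $(v_i,v_{-i})$ and $X'\in\mathcal D$ is feasible there,
\[
 v_iq_i(X)+\sum_{k\ne i}v_kq_k(X)\ \ge\ v_iq_i(X')+\sum_{k\ne i}v_kq_k(X').
\]
Because $X'$ maximizes at $(v_i',v_{-i})$ and $X$ is feasible there,
\[
 v_i'q_i(X')+\sum_{k\ne i}v_kq_k(X')\ \ge\ v_i'q_i(X)+\sum_{k\ne i}v_kq_k(X).
\]
Adding the two inequalities cancels the $v_{-i}$ terms, since those bids are identical in both profiles and each $q_k(\cdot)$ is bid-independent. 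What remains is $(v_i'-v_i)\bigl(q_i(X')-q_i(X)\bigr)\ge0$, and dividing by $v_i'-v_i>0$ gives the claim.

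The only point needing care, and the step I expect to be the real content, is justifying that the comparison lottery is legitimately available and evaluated by the same linear functional in both profiles. This needs two things.

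First, $\mathcal D$ must be identical at both bid profiles. This holds because each $D_{M,d}$ depends only on $p$, $K$ and the fixed public order, and $D_\emptyset$ is always present.

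Second, the value Algorithm~1 optimizes must coincide with $\sum_k v_k q_k(X)$ for every $X$, including $D_\emptyset$ and the case where Step~2 returns the empty display. For nonempty $(M,d)$ this is $G_v(M,d)=\sum_k v_kq_k(D_{M,d})$ by Lemma~\ref{lem:lottery}. For $D_\emptyset$ the value is $0$, which matches the $\max\{0,\cdot\}$ in $\Phi$ from~\eqref{eq:phidef}.

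Since the statement quantifies over any exact maximizers, no tie-breaking consistency is needed: the additive argument works for an arbitrary pair of maximizers. This is exactly why Theorem~\ref{thm:poly}'s exact rational computation matters.
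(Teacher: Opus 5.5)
Your proposal is correct and is the paper's proof: both add the optimality inequality for $X$ at $(v_i,v_{-i})$ and for $X'$ at $(v_i',v_{-i})$, cancel the common $v_{-i}$ terms, and obtain $(v_i'-v_i)\bigl(q_i(X')-q_i(X)\bigr)\ge0$. Your two preliminary checks are exactly the preconditions the paper states just before the theorem: the range $\mathcal D$ is bid-independent, and by Lemma~\ref{lem:lottery} the optimized objective equals the delivered welfare.
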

\begin{proof}
Let $B(X)=\sum_{h\ne i}v_hq_h(X)$. By optimality, $v_iq_i(X)+B(X)\ge v_iq_i(X')+B(X')$ and $v_i'q_i(X')+B(X')\ge v_i'q_i(X)+B(X)$. Adding the two inequalities gives $(v_i'-v_i)\bigl(q_i(X')-q_i(X)\bigr)\ge0$.
\end{proof}

The comparison is global, covering changes of the matching, the designated edge, and ties without discretizing bids; it requires only that~\eqref{eq:maxrange} maximizes the delivered welfare.

\paragraph{Payments.} Let $H_i(v_{-i})=\max_{X\in\mathcal D}\sum_{h\ne i}v_hq_h(X)=\Phi(0,v_{-i})$, and charge the expected payment
\begin{equation}\label{eq:payment}
 T_i(v)=H_i(v_{-i})-\sum_{h\ne i}v_hq_h(X(v))=H_i(v_{-i})-\Phi(v)+v_iq_i(v).
\end{equation}
This is the pivot payment for the range $\mathcal D$, and it coincides with the normalized single-parameter envelope payment. Indeed, for fixed $v_{-i}$, the function $V_i(w)=\Phi(w,v_{-i})$ is a finite maximum of affine functions of $w$, so it is convex and piecewise linear, and the selected $q_i(w,v_{-i})$ is a subgradient at every $w$. Hence $V_i(v_i)-V_i(0)=\int_0^{v_i}q_i(w,v_{-i})\,dw$, and substituting $V_i(v_i)=\Phi(v)$ and $V_i(0)=H_i(v_{-i})$ into~\eqref{eq:payment} gives $T_i(v)=v_iq_i(v)-\int_0^{v_i}q_i(w,v_{-i})\,dw$. Unselected advertisers pay nothing: if $q_i(v)=0$, then $\Phi(v)=\sum_{h\ne i}v_hq_h(X(v))\le H_i(v_{-i})\le\Phi(v)$, so $T_i(v)=0$. Payments therefore require one additional optimization per selected advertiser.

\begin{proposition}[Incentives]\label{prop:tie}
With payments~\eqref{eq:payment}, truthful reporting maximizes each advertiser's expected utility for every $v_{-i}$ (truthfulness in expectation), and $0\le T_i(v)\le v_iq_i(v)$ (individual rationality in expectation).
\end{proposition}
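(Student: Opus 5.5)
The argument is the standard VCG-over-a-range (maximal-in-distributional-range) argument, instantiated with the identities already established. Fix $v_{-i}$ and let advertiser $i$ have true value $v_i$ and report $b_i$. Write $X(b_i)=X(b_i,v_{-i})$ for the selected lottery. By \eqref{eq:payment}, the payment at report $b_i$ equals $H_i(v_{-i})-\sum_{h\ne i}v_hq_h(X(b_i))$. The first term does not depend on $b_i$. Hence the true expected utility at report $b_i$ is
\[
 v_iq_i(X(b_i))+\sum_{h\ne i}v_hq_h(X(b_i))-H_i(v_{-i}),
\]
which is the true expected welfare of the lottery $X(b_i)$ minus a constant. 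Here Lemma~\ref{lem:lottery} matters: $q_h(X)$ is the true expected cascade CTR, so the sum is the delivered welfare. Any report selects some lottery in $\mathcal D$, and $\mathcal D$ does not depend on the bids. The truthful report $b_i=v_i$ selects an exact maximizer of exactly this welfare over $\mathcal D$ by \eqref{eq:maxrange}. Therefore its utility is at least the utility of any deviation, and this holds for every tie-breaking rule, since every exact maximizer attains the same value $\Phi(v)$.

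For individual rationality I would handle the two bounds separately. For $T_i\ge0$: $H_i(v_{-i})$ is the maximum over $\mathcal D$ of $\sum_{h\ne i}v_hq_h(X)$, and $X(v)\in\mathcal D$, so $H_i(v_{-i})\ge\sum_{h\ne i}v_hq_h(X(v))$, giving $T_i(v)\ge0$. For $T_i\le v_iq_i(v)$: by the second form of \eqref{eq:payment}, this is equivalent to $H_i(v_{-i})\le\Phi(v)$. Let $Y$ attain $H_i(v_{-i})$. Since $v_iq_i(Y)\ge0$, we get $\Phi(v)\ge\sum_hv_hq_h(Y)\ge\sum_{h\ne i}v_hq_h(Y)=H_i(v_{-i})$. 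This uses nonnegativity of values and of expected CTRs. Equivalently, the envelope form $T_i=v_iq_i-\int_0^{v_i}q_i\,dw$ together with $q_i\ge0$ and monotonicity (Theorem~\ref{thm:mono}) gives $0\le T_i\le v_iq_i$, and I would note this as a cross-check.

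The main subtlety, more than any real obstacle, is that $\Phi$ is defined with an outer $\max\{0,\cdot\}$ and that $D_\emptyset\in\mathcal D$. So I would check that $H_i(v_{-i})=\Phi(0,v_{-i})$ is consistent: the maximum over $\mathcal D$ of $\sum_{h\ne i}v_hq_h$ includes the empty lottery with value $0$, and this matches the $\max\{0,\cdot\}$. I would also note that truthfulness is claimed only in expectation over the thinning randomness, which is exactly what the quasi-linear expected-utility model requires.
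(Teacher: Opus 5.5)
Your proposal is correct and follows essentially the same argument as the paper's proof. The paper also writes the deviating advertiser's expected utility as the true expected welfare of the selected lottery minus the bid-independent term $H_i(v_{-i})$ and invokes \eqref{eq:maxrange}, and it obtains $T_i\ge0$ from the maximality of $H_i$ and $T_i\le v_iq_i$ from $\Phi(v)\ge H_i(v_{-i})$ when $v_i\ge0$; your envelope-formula cross-check and your check of $D_\emptyset$ against the $\max\{0,\cdot\}$ in $\Phi$ are consistent with this but not needed.
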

\begin{proof}
Suppose that advertiser $i$ has value $\theta_i$ and reports $b_i$, and let $X=X(b_i,v_{-i})$. Its expected utility is $\theta_iq_i(X)+\sum_{h\ne i}v_hq_h(X)-H_i(v_{-i})$. The last term does not depend on $b_i$, and by~\eqref{eq:maxrange} the sum of the first two is maximized by reporting $b_i=\theta_i$. For individual rationality, $T_i\ge0$ because $H_i$ maximizes the other advertisers' welfare over $\mathcal D$, and $T_i\le v_iq_i$ because $\Phi(v)\ge H_i(v_{-i})$ whenever $v_i\ge0$.
\end{proof}

\paragraph{Per-click billing.} The platform may instead charge $T_i(v)/q_i(v)$ per realized click, and nothing if $q_i(v)=0$. The expected payment, and hence the incentive guarantee, is unchanged, and the price per click never exceeds $v_i$, so a truthful advertiser never pays more than its value for a click. The mechanism is not universally truthful: incentives hold in expectation over its internal randomness.

We can now state the main result formally.
\begin{theorem}[Formal version of Theorem~\ref{thm:main}]\label{thm:main-formal}
Consider the pair-specific first-click cascade model with rational public standalone CTRs $p_{ij}\in[0,1]$, rational per-click values $v_i\ge0$, and any cardinality bound $K$. Algorithm~1, with the payments~\eqref{eq:payment}, has the following properties.
\begin{enumerate}[leftmargin=1.3em,itemsep=1pt,topsep=2pt]
\item \emph{Approximation:} $\E[\W(A(v))]\ge\frac12\OPT(v)$, where $A(v)$ is the random display produced for bids $v$ and $\OPT(v)$ is the optimal cascade welfare over all feasible matchings and display orders.
\item \emph{Monotonicity:} for every advertiser $i$, the expected cascade CTR $q_i(v)=\E[\pi_i(A(v))]$ is nondecreasing in $v_i$.
\item \emph{Incentives:} the mechanism is truthful in expectation and individually rational in expectation for risk-neutral, quasi-linear advertisers whose only private information is $v_i$.
\item \emph{Computation:} the selected lottery, the expected allocation $q(v)$, and the expected payments are computable in deterministic polynomial time. Sampling a display requires at most $K$ independent Bernoulli draws with rational parameters of polynomial bit length, and hence expected polynomial time with fair random bits.
\end{enumerate}
\end{theorem}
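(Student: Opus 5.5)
The formal theorem bundles results already proved above, so the plan is to assemble them item by item and check that their hypotheses agree.

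\emph{Approximation.} By Lemma~\ref{lem:lottery}, the lottery $D_{M^\star,d^\star}$ selected in Step~1 has expected cascade welfare exactly $G_v(M^\star,d^\star)$. Since $(M^\star,d^\star)$ is an exact maximizer, this value equals $\Phi(v)$ whenever it is positive. In the degenerate case $\Phi(v)=0$ the algorithm shows nothing. Then $r_0(v)\le\Phi(v)=0$, and Theorem~\ref{thm:approx} gives $\OPT(v)\le 2r_0(v)=0$, so the bound holds trivially. Otherwise Theorem~\ref{thm:approx} gives $\E[\W(A(v))]=\Phi(v)\ge r_0(v)\ge\tfrac12\OPT(v)$.

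\emph{Monotonicity and incentives.} By Lemma~\ref{lem:lottery}, $q_i(v)$ is the expected cascade CTR $q_i(X(v))$ of the selected lottery. This makes Algorithm~1 an exact maximizer of~\eqref{eq:maxrange} over the bid-independent range $\mathcal D$, with the empty lottery covering the case $\Phi=0$. Theorem~\ref{thm:mono} then applies to any tie-breaking rule and yields monotonicity. Truthfulness in expectation and individual rationality in expectation are exactly Proposition~\ref{prop:tie}, under the stated risk-neutral, quasi-linear, single-private-parameter assumptions.

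\emph{Computation.} Theorem~\ref{thm:poly} gives the maximizer, $\Phi(v)$ and $q(v)$ in deterministic polynomial time. Each payment $T_i$ in~\eqref{eq:payment} needs one more evaluation $H_i(v_{-i})=\Phi(0,v_{-i})$, again by Theorem~\ref{thm:poly}, and only selected advertisers (at most $K$) need it. For sampling, the retention probabilities $a_k=c/R_k$ are rationals of polynomial bit length: by~\eqref{eq:suffix} they are built from the input CTRs by polynomially many additions and one division. The Bernoulli sampler described in the computational-model paragraph then gives expected polynomial time.

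The only real obstacle is consistency at the boundaries. The approximation claim must hold in the empty-display case $\Phi=0$, and Step~1 must be exactly the maximizer of~\eqref{eq:maxrange} over $\mathcal D$, including $D_\emptyset$ and arbitrary tie-breaking. Both points follow directly from the definitions above.
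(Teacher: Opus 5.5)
Your proposal is correct and matches the paper's proof, which simply cites Theorem~\ref{thm:approx}, Theorem~\ref{thm:mono}, Proposition~\ref{prop:tie}, and Theorem~\ref{thm:poly} together with the computational-model paragraph of Section~\ref{sec:opt}, one per item. Your extra checks are sound and only make explicit what the paper leaves implicit: the $\Phi(v)=0$ case, and the fact that Step~1 is an exact maximizer over $\mathcal D$, including $D_\emptyset$ and arbitrary tie-breaking.
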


\begin{proof}
Approximation is Theorem~\ref{thm:approx}, monotonicity is Theorem~\ref{thm:mono}, incentives are Proposition~\ref{prop:tie}, and computation follows from Theorem~\ref{thm:poly} and the computational model of Section~\ref{sec:opt}.
\end{proof}
\section{Why Exact Surrogate Optimization Is Not Enough}\label{sec:counterexample}

The constant-factor algorithm of Balseiro et al.~\cite{balseiro2026position} approximates a \emph{restricted welfare}, in which advertisers are taken in nonincreasing order of value and each matched ad contributes the minimum of $p_e$ and the remaining mass; the remaining mass starts at $1$ and decreases by each earlier contribution. This surrogate approximates cascade welfare within a factor of $4$. One might expect non-monotonicity to arise only from approximating the surrogate, so that exact optimization followed by display in value order would be monotone. The following instance shows otherwise.

Take $n=m=K=3$, bids $(v_1,v_2,v_3)=(2,1,u)$, and
\begin{equation}\label{eq:counterp}
 p=\begin{pmatrix}
 1/10&3/20&0\\
 0&2/5&1/4\\
 3/5&0&9/10
 \end{pmatrix}.
\end{equation}
For both bids $u\in\{\tfrac25,\tfrac35\}$ the value order is $1,2,3$. Let $M_A$ be the diagonal matching and $M_B=\{(1,2),(2,3),(3,1)\}$. Figure~\ref{fig:surrogate-switch} shows both matchings. Their surrogate values are $\tfrac35+\tfrac12u$ and $\tfrac{11}{20}+\tfrac35u$. Enumerating all $34$ feasible matchings (Table~\ref{tab:counter}) shows that $M_A$ is the unique surrogate optimum at $u=\tfrac25$ and $M_B$ is the unique optimum at $u=\tfrac35$. Writing $\pi_3(M)$ for advertiser~3's cascade CTR when $M$ is displayed in value order, this CTR nevertheless falls:
\begin{align*}
 \pi_3(M_A)&=\tfrac9{10}\bigl(1-\tfrac1{10}\bigr)\bigl(1-\tfrac25\bigr)=\tfrac{243}{500},\\
 \pi_3(M_B)&=\tfrac35\bigl(1-\tfrac3{20}\bigr)\bigl(1-\tfrac14\bigr)=\tfrac{153}{400}.
\end{align*}

\begin{figure}[tb]
\centering
\includegraphics[width=\columnwidth]{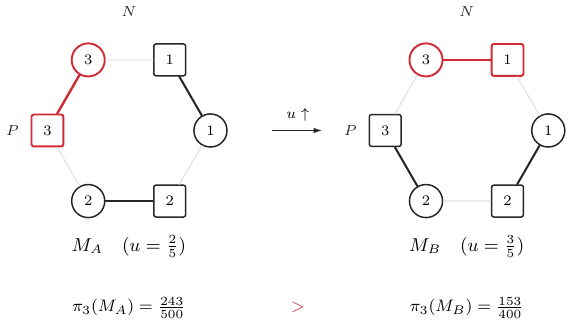}
\Description{Two panels show the positive-CTR advertiser-position pairs as a six-cycle. Circles are advertisers and rounded squares are positions. The left matching MA selects pairs (1,1), (2,2), and (3,3) at bid u equals two fifths. The right matching MB selects pairs (1,2), (2,3), and (3,1) at bid u equals three fifths. Selected edges are bold, and advertiser 3's edge is red. A rightward arrow marks the bid increase. The displayed inequality shows that advertiser 3's cascade CTR falls from 243/500 to 153/400.}
\caption{Exact restricted-welfare optimization is not monotone. Circles are advertisers, squares are positions, and gray edges are unselected positive-CTR pairs. Raising $u$ from $\tfrac25$ to $\tfrac35$ switches the unique surrogate optimum from $M_A$ to $M_B$, and advertiser~3's cascade CTR (red edge) falls.}
\label{fig:surrogate-switch}
\end{figure}

\begin{table}[t]
\centering
\caption{Monotonicity failure of the exact restricted-welfare optimizer (all 34 matchings enumerated).}
\label{tab:counter}
\small
\begin{tabular}{@{}ccccc@{}}
\toprule
$u$ & Unique opt. & Surrogate & Runner-up & True $\pi_3$\\
\midrule
$2/5$ & $M_A$ & $4/5$ & $79/100$ & $243/500$\\
$3/5$ & $M_B$ & $91/100$ & $9/10$ & $153/400$\\
\bottomrule
\end{tabular}
\end{table}

\begin{table*}[t]
\centering
\caption{Small-instance welfare$/\OPT$ (2,000 seeded instances per row), reported as mean / 5th percentile / minimum.}
\label{tab:synthetic}
\small
\begin{tabular}{@{}llcccc@{}}
\toprule
$(n,m,K)$ & Regime & \textbf{DL (ours, TIE)} & Sym (benchmark) & RW (non-monotone oracle) & Bkt (prior monotone) \\
\midrule
$(4,4,4)$ & low CTR & \textbf{.848 / .803 / .760} & .748 / .709 / .674 & 1.000 / 1.000 / .964 & .351 / .294 / .243 \\
 & uniform & \textbf{.926 / .840 / .728} & .606 / .518 / .500 & .956 / .866 / .763 & .447 / .359 / .230 \\
 & heterog. & \textbf{.866 / .775 / .714} & .583 / .523 / .501 & .967 / .897 / .816 & .457 / .353 / .271 \\
\addlinespace
$(6,4,3)$ & low CTR & \textbf{.886 / .857 / .828} & .776 / .750 / .728 & 1.000 / 1.000 / .965 & .389 / .341 / .262 \\
 & uniform & \textbf{.934 / .858 / .763} & .635 / .558 / .500 & .952 / .851 / .764 & .482 / .405 / .294 \\
 & heterog. & \textbf{.867 / .785 / .743} & .588 / .525 / .501 & .964 / .889 / .783 & .492 / .395 / .297 \\
\bottomrule
\end{tabular}
\end{table*}

Raising advertiser~3's bid from $0.4$ to $0.6$ thus lowers its cascade CTR from $0.486$ to $0.3825$. Both optima are unique, and neither the value order nor the number of displayed ads changes. The failure is therefore not due to tie-breaking, bucketing, or approximation: the click allocation of the surrogate differs from that of the cascade. The example concerns this surrogate only. It does not show that other deterministic or surrogate-based approaches must fail, nor that the PTAS-based allocation is non-monotone. It motivates our design, which makes the optimized objective equal to the delivered welfare (Lemma~\ref{lem:lottery}) before optimizing it.

\section{A Descriptive Synthetic Comparison}\label{sec:synthetic}

The factor $2$ is a worst-case guarantee. To see how it compares with welfare on specified synthetic distributions, we compare three allocation rules and one analytical benchmark with the exact optimum on small random instances (Section~\ref{sec:small}), run the polynomial-time optimizer on larger instances (Section~\ref{sec:scale}), and measure sensitivity to CTR estimation error (Section~\ref{sec:noise}). These experiments illustrate the theorems; they are not a statistical validation.

\subsection{Small instances with exact optimum}\label{sec:small}

We compare the following methods and benchmarks; RW and Bkt are taken from Balseiro et al.~\cite{balseiro2026position} and are tied to specific results there.
\begin{itemize}[leftmargin=1.3em,itemsep=1pt,topsep=2pt]
\item \textbf{DL}: Algorithm~1, with expected welfare $\Phi$.
\item \textbf{Sym}: the symmetric benchmark $r_0$ of Section~\ref{sec:approx}, an analytical quantity rather than a mechanism.
\item \textbf{RW} (non-monotone oracle): an \emph{exact} maximizer of the restricted welfare of~\cite[Sec.~5.2]{balseiro2026position}, the objective approximated by their PTAS~\cite[Thm.~5.12, Cor.~5.13]{balseiro2026position}, displayed in nonincreasing value order~\cite[Lemma~5.2]{balseiro2026position}. Exact enumeration makes it stronger than their polynomial-time allocation. It has no monotonicity guarantee and serves only as a welfare reference.
\item \textbf{Bkt} (prior monotone algorithm): our implementation of~\cite[Sec.~5.4, Thm.~5.14]{balseiro2026position}, which picks one of $L=\lceil\log_2(4m)\rceil$ CTR buckets, $b$, uniformly at random~\cite[Lemma~5.15]{balseiro2026position} and runs their greedy Algorithm~2 by $v_ip_{ij}$ with cardinality cap $\min(2^b,m)$, displaying ads in insertion order. Bucket $b<L$ contains the edges with $2^{-b}<p_{ij}\le2^{-(b-1)}$, and bucket $L$ the remaining edges with $p_{ij}>0$. We add the cap $K$, which among the small configurations binds only for $(6,4,3)$, and compute the exact expectation over buckets.
\end{itemize}

We use two sizes, $(n,m,K)=(4,4,4)$ and $(6,4,3)$, the latter with a binding cardinality limit, and three regimes: low CTR, with $p_{ij}\sim U[0,0.2]$ and $v_i\sim U[0.2,1]$; uniform, with $p_{ij}\sim U[0,1]$ and $v_i\sim U[0.2,1]$; and heterogeneous, with $p_{ij}\sim\mathrm{Beta}(0.7,2)$ and $\log v_i\sim N(0,0.8^2)$. Each of the six cells has 2,000 seeded instances, 12,000 in total. $\OPT$ is computed by enumerating all matchings, each displayed in value order. Table~\ref{tab:synthetic} reports the mean, 5th percentile, and minimum of welfare$/\OPT$.

\begin{figure}[t]
\centering
\includegraphics[width=\columnwidth]{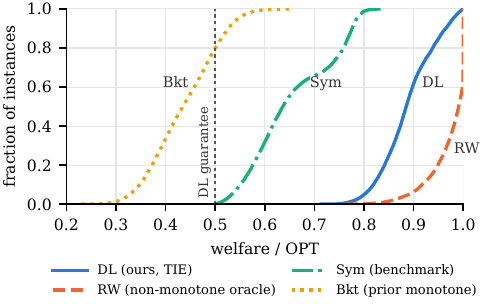}
\caption{Empirical CDFs of welfare$/\OPT$ over 12,000 synthetic instances (Section~\ref{sec:small}). The vertical line marks DL's worst-case guarantee of $1/2$.}
\Description{Empirical cumulative distribution functions of the welfare ratio for four methods. The prior monotone bucketing rule lies mostly between 0.3 and 0.6. The symmetric benchmark lies between 0.5 and 0.83. The designated-last mechanism lies between 0.71 and 1.0, mostly between 0.8 and 0.97. The non-monotone restricted-welfare oracle lies between 0.76 and 1.0, with a large mass at 1.0. A vertical line marks 0.5, the worst-case guarantee for the designated-last mechanism.}
\label{fig:ecdf}
\end{figure}

Across the six synthetic settings, DL achieves 85--93\% of $\OPT$ on average and improves on Sym by 10--32 percentage points. It also weakly outperforms Bkt on all 12,000 sampled instances, without implying dominance beyond the sample. RW achieves higher mean welfare but serves as a non-monotone oracle.

\begin{table*}[t]
\centering
\caption{Larger-instance results (20 seeded instances per row). Times are wall-clock seconds on one 2.1\,GHz cloud CPU (SciPy/HiGHS). LPs: median designated-edge LP calls plus one for $r_0$. Selected: mean $|M^\star|$ before thinning. Brackets: means of instance-wise numerical welfare$/\OPT$ endpoints.}
\label{tab:scale}
\small
\setlength{\tabcolsep}{4.5pt}
\begin{tabular}{@{}llcccccc@{}}
\toprule
$(n,m,K)$ & Regime & Alloc.\ (median) & Alloc.\ $+$ payments (median / max) & LPs & Selected & \textbf{DL/OPT bracket} & Bkt/OPT bracket \\
\midrule
$(20,5,3)$ & low CTR & 0.01 & 0.03 / 0.05 & 1 & 3.0 & \textbf{[0.74, 0.88]} & [0.32, 0.39] \\
 & heterog. & 0.01 & 0.03 / 0.05 & 1 & 2.5 & \textbf{[0.72, 0.86]} & [0.38, 0.46] \\
$(50,10,5)$ & low CTR & 0.03 & 0.17 / 0.23 & 2 & 5.0 & \textbf{[0.57, 0.83]} & [0.26, 0.38] \\
 & heterog. & 0.03 & 0.09 / 0.17 & 1 & 2.1 & \textbf{[0.70, 0.85]} & [0.38, 0.46] \\
$(100,10,5)$ & low CTR & 0.06 & 0.34 / 0.45 & 1 & 5.0 & \textbf{[0.57, 0.83]} & [0.26, 0.38] \\
 & heterog. & 0.08 & 0.26 / 0.66 & 1 & 2.2 & \textbf{[0.73, 0.87]} & [0.38, 0.46] \\
$(200,20,8)$ & low CTR & 0.41 & 3.60 / 5.32 & 2 & 8.0 & \textbf{[0.54, 0.80]} & [0.25, 0.37] \\
 & heterog. & 0.56 & 1.64 / 3.30 & 1 & 2.0 & \textbf{[0.68, 0.88]} & [0.37, 0.47] \\
\bottomrule
\end{tabular}
\end{table*}

\subsection{Larger instances: running time and welfare brackets}\label{sec:scale}

A response typically has many candidate advertisers and few insertion points, so we next ran the polynomial-time optimizer of Section~\ref{sec:opt}, rather than enumeration, on the four sizes of Table~\ref{tab:scale} with 20 seeded instances per size in the low-CTR and heterogeneous regimes. The implementation solves LP~\eqref{eq:lp} in floating point with HiGHS through SciPy and uses two shortcuts that preserve the exact mathematical optimum (Appendix~\ref{app:opt}). Before timing, we checked the optimizer against exhaustive enumeration on 300 small instances and against the unpruned search on 20 medium instances.

$\OPT$ cannot be enumerated at these sizes, so Table~\ref{tab:scale} reports welfare brackets. On one side, $\OPT\le\mathrm{UB}=\min\{2r_0,\mathrm{MW}\}$: the bound $2r_0$ is Theorem~\ref{thm:approx}, and $\mathrm{MW}=\max_{M\in\calM_K}A_v(M)$, the largest standalone welfare, attained by a matching $M_{\mathrm{MW}}$, bounds $\OPT$ because cascade CTRs never exceed standalone CTRs. On the other side, $\OPT\ge\mathrm{LB}$, the largest value-ordered cascade welfare among the matchings selected by DL, $M_{\mathrm{MW}}$, the symmetric benchmark, and the Bkt buckets. Hence $\Phi/\mathrm{UB}\le\mathrm{DL}/\OPT\le\Phi/\mathrm{LB}$, and similarly for Bkt.

At the largest tested size, median allocation time remains below one second, and the maximum time including payments is 5.32\,s. DL achieves higher welfare than Bkt on all 160 sampled instances. Under heterogeneous CTRs, DL typically selects fewer ads because additional CTR mass reduces $c(M,d)$.

\subsection{Sensitivity to CTR estimation error}\label{sec:noise}

In practice, $p_{ij}$ may be estimated by a prediction model. To measure welfare sensitivity to misspecification, we computed each rule from noisy estimates $\hat p_{ij}=\min\{1,\,p_{ij}e^{\sigma Z_{ij}}\}$ with independent standard normal $Z_{ij}$, and evaluated its welfare under the true $p$ against the true $\OPT$. We used $(n,m,K)=(6,4,3)$, where $\OPT$ can be enumerated, with 1,000 seeded instances per regime; all noise levels share these instances, and perturbations are drawn independently for each nonzero $\sigma$. For DL, the retention probabilities are computed from $\hat p$, while true clicks follow $p$; its expected cascade CTRs still follow exactly from the recursion $\E[\pi_k]=R_ka_kp_k$, $R_{k+1}=R_k(1-a_kp_k)$, so no click simulation is needed. We evaluate welfare only; Section~\ref{sec:discussion} discusses incentives.

\begin{table}[t]
\centering
\caption{True welfare$/\OPT$ under noisy CTR estimates (mean / 5th percentile; 1,000 instances per regime; $(n,m,K)=(6,4,3)$). $\sigma$ is the log-space noise standard deviation.}
\label{tab:noise}
\small
\setlength{\tabcolsep}{4pt}
\begin{tabular}{@{}llccc@{}}
\toprule
Regime & $\sigma$ & \textbf{DL} & RW (oracle) & Bkt \\
\midrule
low CTR & $0.0$ & \textbf{0.89 / 0.86} & 1.00 / 1.00 & 0.39 / 0.34 \\
 & $0.1$ & \textbf{0.88 / 0.83} & 0.99 / 0.94 & 0.39 / 0.34 \\
 & $0.3$ & \textbf{0.83 / 0.72} & 0.94 / 0.81 & 0.44 / 0.34 \\
 & $0.5$ & \textbf{0.78 / 0.61} & 0.88 / 0.68 & 0.47 / 0.35 \\
\addlinespace
heterog. & $0.0$ & \textbf{0.87 / 0.78} & 0.96 / 0.89 & 0.49 / 0.39 \\
 & $0.1$ & \textbf{0.86 / 0.76} & 0.96 / 0.88 & 0.49 / 0.39 \\
 & $0.3$ & \textbf{0.82 / 0.66} & 0.93 / 0.75 & 0.49 / 0.38 \\
 & $0.5$ & \textbf{0.78 / 0.53} & 0.88 / 0.62 & 0.50 / 0.37 \\
\bottomrule
\end{tabular}
\end{table}

Table~\ref{tab:noise} shows that DL degrades gradually, losing about ten points of mean welfare at $\sigma=0.5$, comparable to the RW oracle, and it remains well above Bkt. The guarantee, however, holds relative to the estimates: at $\sigma=0.5$ the smallest observed ratio is $0.26$, below $1/2$, so the factor $2$ should be read as conditional on accurate CTRs. Bkt's mean rises with noise mainly in the low-CTR regime, where noise leaves fewer of its uniformly chosen buckets empty; this reflects bucket occupancy, not robustness to misestimation.

\section{Limitations and Future Work}
\label{sec:discussion}

Our mechanism assumes a generative advertising pipeline in which a generator proposes insertion points and a prediction model provides public CTRs $p_{ij}$ for eligible advertiser--position pairs. The mechanism computes $(M^\star,d^\star)$ and payments, thins the non-designated ads, and renders the retained ads in the prescribed order with $d^\star$ last. Per-click billing is supported (Section~\ref{sec:truth}), and per-section caps $K_g$ preserve our guarantees (Section~\ref{sec:opt}).

Three limitations delimit these guarantees. First, the public CTRs must correctly describe user click behavior. When $\hat p\ne p$, the welfare guarantee relative to the true optimum and monotonicity of actual click allocations may fail, so truthfulness in expectation for actual-click utility is not guaranteed (Table~\ref{tab:noise}). Second, the platform must control display order without changing standalone CTRs; fixed-order layouts are not covered by our designated-last construction. Third, random thinning may leave some selected ads undisplayed, and incentive guarantees hold only in expectation, not for every realization. On the tested synthetic instances, our mechanism achieves lower average welfare than the non-monotone oracle (Table~\ref{tab:synthetic}), although this gap does not establish an inherent cost of truthfulness.

Future directions include robust mechanism design under CTR misspecification, strategic creatives, and feedback between allocation and learning~\cite{turnersmith2026pricing}. It also remains open whether polynomial-time monotone mechanisms can improve on factor $2$, which is tight only for our lottery range, and whether ranges that randomize the last ad admit exact polynomial-time optimization (Section~\ref{sec:mechanism}). Further questions concern revenue-optimal design and propensity-based evaluation using logged retention probabilities; evaluating alternative allocation rules beyond the lottery's support requires additional overlap assumptions.

\section{Reproducibility}\label{sec:reproducibility}
The anonymous artifact\footnote{\url{https://anonymous.4open.science/r/anonymous-review-artifact-5061/}} provides code and data for all experiments, together with a Lean~4 formalization that machine-checks key steps in the theoretical proofs.

\clearpage
\bibliographystyle{ACM-Reference-Format}
\bibliography{references}

\appendix

\section{Implementation Details}

\paragraph{Practical implementation.}\label{app:opt} Dinkelbach's parametric method~\cite{dinkelbach1967fractional}, which alternates the recovery step of Section~\ref{sec:opt} with updates of $r_d$, can implement Step~1 of Algorithm~1; it does not replace the bound of Theorem~\ref{thm:poly}. The experiments of Section~\ref{sec:scale} use two shortcuts that preserve exactness. First, by the observation in Section~\ref{sec:mechanism}, the LP for $d$ can be restricted to edges with $p_e\le p_d$. Second, $d$ is skipped when a bound certifies $r_d<r^\ast$, the best ratio found so far, initialized at $r^\ast=r_0\le\Phi$. By the equivalence used in the recovery step, $r_d>r^\ast$ requires $v_dp_d-r^\ast+\sum_{e\in M\setminus\{d\}}(v_e-r^\ast)p_e>0$ for some $M\ni d$, and the sum is at most the total of the $K-1$ largest per-advertiser maxima of $\max\{v_e-r^\ast,0\}\,p_e$.

\section{Counterexample Certificate}\label{app:counter}

For~\eqref{eq:counterp}, the surrogate values of $M_A$ and $M_B$ are affine in $u$ and cross at $u=1/2$. At $u=2/5$, exact enumeration of all 34 feasible matchings gives the optimum $4/5$ at $M_A$ and the runner-up value $79/100$; at $u=3/5$, it gives the optimum $91/100$ at $M_B$ and the runner-up value $9/10$. The artifact cited in Section~\ref{sec:reproducibility} reproduces this certificate in exact rational arithmetic.

\end{document}